\newcommand{\remove}[1]{}
\title{Improved Debordering of Waring Rank}
\author{Amir Shpilka
\thanks{This research was co-funded by the European Union by the European Union (ERC, EACTP, 101142020), the Israel Science Foundation (grant number 514/20) and  the Len Blavatnik and the Blavatnik Family Foundation. Views and opinions expressed are however those of the author(s) only and do not necessarily reflect those of the European Union or the European Research Council Executive Agency. Neither the European Union nor the granting authority can be held responsible for them.}
}
\date{}
\begin{document}

\maketitle

\begin{abstract}
    We prove that if a degree-$d$ homogeneous polynomial $f$ has border Waring rank $\underline{\mathrm{WR}}({f}) = r$, then its Waring rank is bounded by 
\[
{\mathrm{WR}}({f}) \leq d \cdot r^{O(\sqrt{r})}.
\]
This result significantly improves upon the recent bound ${\mathrm{WR}}({f}) \leq d \cdot 4^r$ established in [Dutta, Gesmundo, Ikenmeyer, Jindal, and Lysikov, STACS 2024], which itself was an improvement over the earlier bound ${\mathrm{WR}}({f}) \leq d^r$.
\end{abstract}

\section{Introduction}

Given a circuit class ${\cal C}$, its closure, $\overline{\cal C}$, is defined as the closure of the set of polynomials computable in ${\cal C}$.Specifically, this includes all polynomials that are limits, in the Zariski topology, of a converging sequence of polynomials computable in ${\cal C}$. Over the complex or real fields, this is equivalent to the coefficient vectors converging in the usual sense (i.e., with respect to the Euclidean topology). However, this notion also applies to arbitrary fields and can be defined algebraically.

In this paper, we study the closure of depth-$3$ powering circuit, denoted by $\SWS$. The output of a $\SWSrd{r}{d}$ circuit is a degree-$d$ homogeneous polynomial of the form 
\[f(\vecx)=\sum_{i=1}^{r}\ell_i(\vecx)^d.
\] 
where $\ell_i$ are linear forms (i.e., homogeneous linear polynomials).\footnote{We can allow representations of the form $\sum_{i=1}^{r}c_i\cdot \ell_i^d$, for scalars $c_i$, but over algebraically closed fields this does not change the complexity.} The Waring rank of a homogeneous polynomial of degree $d$ is defined as the minimal $r$ such that $f$ can be computed by a $\SWSrd{r}{d}$ circuit. It is also known as the \emph{symmetric tensor rank} of $f$. The \emph{border Waring rank} of a polynomial $f$, denoted $\bwr{f}$, is the minimal $r$ such that $f$ is in the closure of polynomials of Waring rank at most $r$.

Alder \cite{Alder84} (see also \cite[Appendix 20.6]{BCS-book97}) showed that $\bwr{f}=r$ if there exist $r$ linear functions $\ell_i \in \C(\epsilon)[\vecx]$ such that 
\[f=\lim_{\epsilon\to 0}\sum_{i=1}^{r}\ell_i^d.
\]
Equivalently, $\bwr{f}=r$ if there exist a degree-$d$ polynomial $g\in \C[\epsilon][\vecx]$, an integer $q$, and linear forms $\ell_i \in \C[\epsilon][\vecx]$ such that 
\[\epsilon^q f + \epsilon^{q+1}g=\sum_{i=1}^{r}\ell_i^d.
\]
This alternative definition generalizes to arbitrary characteristic fields.

Understanding the relationship between $\bwr{f}$ and $\wr{f}$ is a longstanding open problem. Forbes \cite{Forbes-talk} conjectures that $\overline{\SWS}=\SWS$. In other words, if $\bwr{f}=\poly(n,d)$, then $\wr{f}$ is also polynomially bounded. Ballico and Bernardi \cite{BallicoBernardi17} proposed a stronger conjecture, asserting that taking limits can save at most a factor of $d$.  Specifically, they conjectured
\[
\wr{f} \leq (\bwr{f} - 1) \cdot \deg(f).
\]   
This conjecture was verified for small values of $r$ ($r=3,4$, and when $d\geq 9$, also for $r=5$) \cite{LandsbergTeitler10,BallicoBernardi13,Ballico18}.

In \cite{DuttaGIJL24}, Dutta, Gesmundo, Ikenmeyer, Jindal, and Lysikov studied the general case and proved that if a polynomial $f$ of degree $d$ has $\bwr{f} = r$, then  
\[
\wr{f} \leq d \cdot 4^r.
\]  

We significantly improve upon the upper bound given in \cite{DuttaGIJL24}.  

\begin{theorem}\label{thm:main}  
Let $f \in \C[\vecx]_d$ be a homogeneous polynomial. If $\bwr{f} = r$, then  
\[
\wr{f} \leq d \cdot r^{10 \sqrt{r}}.
\]  
\end{theorem}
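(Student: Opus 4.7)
The proof plan is to use the apolarity correspondence to convert the border Waring rank representation into a structural decomposition of $f$ into curvilinear contributions, each involving only two essential variables. Since polynomials in two variables of degree $d$ have Waring rank at most $d$, this will produce the linear factor of $d$ in the theorem. By the apolarity lemma, $\bwr{f} = r$ is equivalent to the existence of a smoothable zero-dimensional subscheme $Z \subseteq \mathbb{P}^n$ of length $r$ with $I(Z) \subseteq \mathrm{Ann}(f)$ (after the usual identification of $\C[\vecx]$ with its ring of differential operators). I would decompose $Z$ into its local components $Z_1 \sqcup \cdots \sqcup Z_s$ supported at distinct points $p_j$ with local lengths $r_j$ summing to $r$, and by the Chinese Remainder Theorem applied to $\C[\vecx]/I(Z)$, correspondingly split $f = \sum_j f_j$ with $\mathrm{Ann}(f_j) \supseteq I(Z_j)$, so that $\wr{f} \leq \sum_j \wr{f_j}$.

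The next step is to express each $f_j$ as a linear combination of polynomials apolar to curvilinear subschemes of $Z_j$. A curvilinear subscheme $C$ of $Z_j$ has local ring isomorphic to $\C[u]/(u^t)$, and by Macaulay's correspondence it determines a polynomial $f_C \in \C[\vecx]_d$ lying in the two-variable polynomial ring generated by the linear form associated to $p_j$ and a tangent direction of $C$. Polynomials in two variables of degree $d$ have Waring rank at most $d$, so $\wr{f_C} \leq d$. If one can produce a decomposition $f_j = \sum_c \alpha_c f_{C_c}$ using at most $M_j$ curvilinear subschemes, then $\wr{f_j} \leq M_j \cdot d$, and therefore $\wr{f} \leq d \cdot \sum_j M_j$. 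Thus the theorem reduces to the counting bound $\sum_j M_j \leq r^{10\sqrt{r}}$.

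The counting bound is the heart of the proof and the principal obstacle. For a local Artinian algebra $A$ of length $m$ with embedding dimension $k = \dim_\C \mathfrak{m}/\mathfrak{m}^2$ and socle degree (nilpotency index) $t$, Macaulay's bound $m \leq \binom{k+t-1}{k}$ forces a trade-off: one cannot have both $k$ and $t$ large simultaneously, and in particular a careful analysis gives $\min(k,t) = O(\sqrt{m})$ in the extremal regime. The number of length-$t$ curvilinear subalgebras of $A$ is controlled by the count of length-$t$ flags in the $k$-dimensional tangent space, which scales as $k^{O(t)}$; combining with the Macaulay trade-off yields $M_j \leq r_j^{O(\sqrt{r_j})}$, summing to $\sum_j M_j \leq r^{O(\sqrt{r})}$ across components. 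The main difficulty is rigorously producing the curvilinear decomposition $f_j = \sum_c \alpha_c f_{C_c}$ for arbitrary (possibly non-Gorenstein) local components $Z_j$ and matching the $\sqrt{r}$ exponent against the combinatorial count of curvilinear subalgebras. This would leverage the deformation theory and smoothability of zero-dimensional schemes, as developed in work of Bia{\l}ynicki-Birula, Cartwright--Erman--Velasco--Viray, and Jelisiejew.
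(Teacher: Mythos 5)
Your approach is genuinely different from the paper's, which works entirely via an elementary ``perturbed diagonalization'' of the $\ell_i(\epsilon)$, a corollary that differentiating in $x_k$ drops border rank to at most $r-k$, and then an induction on $r$ that splits variables into a small block $Y$ of size $\lfloor 10\sqrt{r}\rfloor$ (handled by a brute binomial bound) and the remaining block $Z$ (handled recursively). However, your proposal has a gap right at the start that I don't see how to repair, and two further unaddressed gaps.

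The first claimed step --- that $\bwr{f}=r$ is \emph{equivalent} to the existence of a smoothable zero-dimensional apolar scheme $Z$ of length $r$ --- is not a theorem and is in fact believed/known to fail in general. The easy direction is that a smoothable apolar scheme of length $r$ gives $\bwr{f}\le r$, since $Z$ is a flat limit of $r$ reduced points. The converse is precisely the question of whether border rank equals ``smoothable rank,'' and the existence of \emph{wild} polynomials (Buczy\'nski--Jelisiejew) shows that border rank can be strictly smaller than the minimal length of an apolar scheme (and a fortiori of a smoothable one) in some settings; in any case, no general equality is available to invoke. The statement you attribute to ``the apolarity lemma'' is only the apolarity lemma for honest Waring rank and reduced schemes; the border/scheme version you need is exactly what is open. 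Without this, the whole scheme-theoretic translation does not get off the ground.

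Even granting an apolar smoothable scheme, two further steps are asserted but not proved. (i) The claim that each local piece $f_j$ decomposes as $\sum_c \alpha_c f_{C_c}$ over curvilinear subschemes $C_c\subseteq Z_j$ --- you flag this yourself as ``the main difficulty,'' and it is: Macaulay duality associates a polynomial to a \emph{Gorenstein} quotient, and it is not clear that the inverse system of a general local Artinian quotient is spanned, with controlled multiplicity, by inverse systems of curvilinear subalgebras, nor that each such $f_C$ really lives in a two-variable subring (a curvilinear scheme in $\mathbb{P}^n$ generally spans more than a line, so $f_C$ need not be binary). (ii) The counting bound $M_j\le r_j^{O(\sqrt{r_j})}$ is derived from a heuristic about Macaulay's bound forcing $\min(k,t)=O(\sqrt m)$ and a flag count $k^{O(t)}$; neither the trade-off nor the flag count is made precise, and the exponent $10\sqrt{r}$ is nowhere actually produced. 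In short, the plan identifies a plausible-sounding route but each of its three stages (border-to-scheme, curvilinear decomposition, combinatorial count) is an unproven lemma, and the first one is likely false as stated.
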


For more on debordering,  Waring rank, and related problems see \cite{bernardi2018hitchhiker,DuttaDwivediSaxena-demystifying21,DuttaGIJL24}.

\section{Preliminaries}

In this section, we introduce some notation and formally define the Waring rank and border Waring rank. We work over the field $\C$ of complex numbers. The space of homogeneous polynomials of degree $d$ in variables $x = (x_1,\ldots, x_n)$ is denoted by $\C[\vecx]_d$. We write $f \simeq g$ for $f, g \in \C(\epsilon)[\vecx]$ if $\lim_{\epsilon\to 0} f = \lim_{\epsilon\to 0} g$ (in particular, both limits must exist).

The projective space $\PP V$ is defined as the set of lines passing through the origin in $V$. For each nonzero $v \in V$, the corresponding line is denoted $[v] \in \PP V$, where $[v] = [w]$ if and only if $v = c\cdot w$ for some scalar $c\in \C$.

For integers $j \leq d$, we denote 
\[(d)_j=j!\cdot \binom{d}{j}=\prod_{i=0}^{j-1}(d-i),\]
where $(d)_j$ represents the falling factorial.

\subsection{Facts from \cite{DuttaGIJL24}}\label{sec:facts}

\begin{definition}[Waring Rank]\label{def:wr}
    Let \( f \in \mathbb{C}[\vecx] \) be a degree-\( d \) homogeneous polynomial.
    The \emph{Waring rank} of \( f \), denoted \( \wr{f} \), is the smallest integer \( r \) such that there exist homogeneous linear forms \( \ell_1, \ldots, \ell_r \) satisfying:
    \[
    f = \sum_{i=1}^{r} \ell_i^d.
    \]
\end{definition}

\begin{definition}[Border Waring Rank]\label{def:bwr}
    The \emph{border Waring rank} of \( f \), denoted \( \bwr{f} \), is the smallest \( r \) such that \( f \) can be expressed as a limit of a sequence of polynomials with Waring rank at most \( r \). 
\end{definition}

As shown in \cite{Alder84}, the next definition is equivalent to \autoref{def:bwr}. 

\begin{definition}[Border Waring Rank Decomposition]\label{def:bwr-alder}
    A \emph{border Waring rank decomposition} of a degree-\( d \) homogeneous polynomial \( f \in \mathbb{C}[\vecx]_d \) is an expression of the form:
    \[
    f = \lim_{\epsilon \to 0} \sum_{i=1}^{r} \ell_i^d,
    \]
    where \( \ell_1, \ldots, \ell_r \in \mathbb{C}(\epsilon)[\vecx]_1 \) are linear forms with coefficients rationally dependent on \( \epsilon \). The border Waring rank \( \bwr{f} \) is the smallest number \( r \) of summands in such a decomposition.
\end{definition}

A rational family of linear forms \( \ell \in \mathbb{C}(\epsilon)[\vecx]_1 \) always has a well-defined limit when viewed projectively.
Specifically, if \( \ell(\epsilon) \) is expanded as a Laurent series:
\[
\ell(\epsilon) = \sum_{i=q}^{\infty} \epsilon^i \ell_i, \quad \text{with } \ell_q \neq 0,
\]
then:
\begin{equation}\label{eq:local}
\lim_{\epsilon\to 0}[\ell(\epsilon)] = \lim_{\epsilon\to 0}\sum_{i=q}^{\infty}\epsilon^i \ell_{q+i} = [\ell_q].   
\end{equation}

A border Waring rank decomposition is called \emph{local} if, for all summands in the decomposition, this limit is the same.

\begin{definition}[Local Decomposition \cite{DuttaGIJL24}]
    Let \( f \in \mathbb{C}[\epsilon]_d \) be a degree-\( d \) homogeneous polynomial. A border Waring rank decomposition:
    \[
    f = \lim_{\epsilon \to 0} \sum_{i=1}^{r} \ell_i^d
    \]
    is called a \emph{local border decomposition} if there exists a linear form \( \ell \in \mathbb{C}[\vecx]_1 \) such that:
    \[
    \lim_{\epsilon \to 0} [\ell_i(\epsilon)] = [\ell] \quad \text{for all } i \in [r].
    \]
    The point \( [\ell] \in \mathbb{P}\mathbb{C}[\vecx]_1 \) is called the \emph{base} of the decomposition. 
    A local decomposition is called \emph{standard} if \( \ell_1 = c \cdot \epsilon^q \ell \) for some \( q \in \mathbb{Z} \) and \( c \in \mathbb{C} \). 
\end{definition}

The number of essential variables of a homogeneous polynomial \( f \) is the smallest integer \( m \) such that, after a linear change of coordinates, \( f \) can be expressed as a polynomial in \( m \) variables. Denote the number of essential variables of \( f \) by \( N(f) \).

\begin{lemma}[Lemma 4 of \cite{DuttaGIJL24}]
\label{lem:ess-var}
    For a homogeneous polynomial \( f \in \mathbb{C}[\vecx]_d \), we have \( N(f) \leq \bwr{f} \).
\end{lemma}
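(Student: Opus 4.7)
The plan is to identify $N(f)$ with the rank of a derivative matrix and then use that having rank at most $r$ is a Zariski-closed condition, so it survives the limit in a border decomposition.

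\textbf{Step 1: Characterization of $N(f)$.} I would first argue that
\[
N(f) \;=\; \dim_{\C} \operatorname{span}\{\partial_1 f,\ldots,\partial_n f\}
\]
as a subspace of $\C[\vecx]_{d-1}$. If $f$ depends on only $m$ variables after a linear change, each $\partial_j f$ is a $\C$-linear combination of $m$ partial derivatives. Conversely, if this span has dimension $m$, the kernel $K$ of the linear map $v \mapsto \sum_j v_j \partial_j f$ is an $(n-m)$-dimensional subspace of $\C^n$, and choosing coordinates $y_1,\ldots,y_n$ so that $K = \{y_1 = \cdots = y_m = 0\}$ shows that $f$ is annihilated by $\partial_{y_{m+1}}, \ldots, \partial_{y_n}$ and hence depends only on $y_1, \ldots, y_m$.

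\textbf{Step 2: Derivative span under a border decomposition.} Fix a border Waring decomposition $f = \lim_{\epsilon \to 0} f_\epsilon$ with $f_\epsilon = \sum_{i=1}^{r} \ell_i(\epsilon)^d$, where $r = \bwr{f}$. Differentiating inside the sum,
\[
\partial_j f_\epsilon \;=\; d\sum_{i=1}^{r} (\partial_j \ell_i(\epsilon)) \cdot \ell_i(\epsilon)^{d-1},
\]
so every $\partial_j f_\epsilon$ lies in the $\C(\epsilon)$-span of the $r$ polynomials $\ell_1(\epsilon)^{d-1},\ldots,\ell_r(\epsilon)^{d-1}$. Let $M(\epsilon)$ be the matrix whose rows are the monomial coefficient vectors of $\partial_1 f_\epsilon,\ldots,\partial_n f_\epsilon$; viewed over $\C(\epsilon)$ it has $\operatorname{rank} M(\epsilon) \leq r$.

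\textbf{Step 3: Specialization at $\epsilon=0$.} Since the limit defining $f$ exists, $f_\epsilon$ lies in $\C[[\epsilon]][\vecx]_d$, so the entries of $M(\epsilon)$ lie in $\C[[\epsilon]]$ and $M(0)$ is exactly the coefficient matrix of $\partial_1 f,\ldots,\partial_n f$. The condition $\operatorname{rank} \leq r$ is cut out by vanishing of all $(r{+}1)\times(r{+}1)$ minors; these minors are polynomials in the entries of $M(\epsilon)$, lie in $\C[[\epsilon]]$, and vanish identically in $\C((\epsilon))$ by Step 2, hence they vanish at $\epsilon = 0$. Therefore $\operatorname{rank} M(0) \leq r$, and Step 1 gives $N(f) \leq r = \bwr{f}$.

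The only border-sensitive piece is Step 3: individual forms $\ell_i(\epsilon)$ may have poles at $\epsilon = 0$ even though $f_\epsilon$ does not, but this is harmless since the entries of $M(\epsilon)$, and therefore its minors, already lie in $\C[[\epsilon]]$. Step 1 is the standard identification of $N(f)$ with the rank of the first partial-derivative catalecticant, so the core content of the proof is just the upper semicontinuity of matrix rank applied to this catalecticant.
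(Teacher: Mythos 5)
The paper does not prove this lemma; it cites it verbatim as Lemma~4 of \cite{DuttaGIJL24}, so there is no internal proof to compare against. Your argument is correct and is the standard semicontinuity proof: Step~1 is the usual characterization of the number of essential variables as the rank of the first partial-derivative catalecticant, Step~2 bounds that rank by $r$ over $\C(\epsilon)$ via the chain rule applied to the border decomposition, and Step~3 passes to $\epsilon=0$ by noting that the $(r{+}1)\times(r{+}1)$ minors of $M(\epsilon)$ vanish identically over $\C(\epsilon)$ and are regular at $\epsilon=0$ (because $f_\epsilon$, and hence each entry of $M(\epsilon)$, is). You also correctly flag the one subtlety, namely that the individual $\ell_i(\epsilon)$ may have poles even though the entries of $M(\epsilon)$ do not.
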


\begin{lemma}[Lemma 6 of \cite{DuttaGIJL24}]\label{lem:D:standard}
    If \( f \) has a local border decomposition, then it has a standard local border decomposition with the same base and the same number of summands. 
\end{lemma}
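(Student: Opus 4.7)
The plan is to apply an $\epsilon$-dependent invertible change of coordinates $T(\epsilon) \in GL_n(\mathbb{C}(\epsilon))$ with $\lim_{\epsilon\to 0} T(\epsilon) = I_n$, chosen so that $\ell_1(T\vecx) = c \cdot \epsilon^q \ell(\vecx)$ exactly. Because $T$ specializes to the identity at $\epsilon = 0$, the substitution $\vecx \mapsto T\vecx$ transforms the given border decomposition of $f$ into a new border decomposition of the same $f$, with the same number of summands and the same base $[\ell]$, but with the first summand now in the required standard form.

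To construct $T$, first perform a constant change of basis so that $\ell = x_1$. Writing $\ell_1 = \sum_j r_j(\epsilon)\, x_j$ with $r_j \in \mathbb{C}(\epsilon)$, let $q$ be the minimum $\epsilon$-adic order of the $r_j$'s and set $s_j(\epsilon) := \epsilon^{-q} r_j(\epsilon)$. The local-decomposition hypothesis says $s_1(0) = c \neq 0$ and $s_j(0) = 0$ for $j \geq 2$, so $s_j = \epsilon \cdot u_j$ with $u_j$ regular at $0$ for $j \geq 2$. Take $T$ to be the block-triangular matrix that fixes $x_2,\ldots,x_n$ and sends $x_1$ to
\[
\frac{1}{s_1(\epsilon)}\Bigl(c\, x_1 - \epsilon \sum_{j \geq 2} u_j(\epsilon)\, x_j\Bigr).
\]
A direct computation gives $\det T(\epsilon) = c/s_1(\epsilon)$, which is a unit in $\mathbb{C}(\epsilon)$; $\lim_{\epsilon\to 0} T(\epsilon) = I_n$; and $\ell_1(T\vecx) = c\, \epsilon^q\, x_1$ by construction.

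It remains to verify that $f(\vecx) = \lim_{\epsilon\to 0}\sum_{i=1}^r \ell_i(T\vecx)^d$ and that each $\ell_i(T\vecx)$ still has projective limit $[\ell]$; together these show the new decomposition is local with base $[\ell]$ and is standard. For $i \geq 2$, the lowest-order $\epsilon$-term of $\ell_i$ is a nonzero multiple of $x_1$, and since $T \to I_n$, the same remains true after substitution. The main (but mild) technical point is the convergence of $\sum_i \ell_i(T\vecx)^d$: individual summands may carry arbitrarily negative powers of $\epsilon$ in their coefficients, so one cannot pass to the limit termwise. The trick will be to view $\sum_i \ell_i^d = \sum_I a_I(\epsilon)\, \vecx^I$ as a polynomial in $\vecx$ whose coefficients $a_I(\epsilon)$ are already bounded at $\epsilon = 0$ (they converge to the coefficients of $f$); substituting $\vecx \mapsto T\vecx$ expands each $\vecx^I$ into a polynomial in $\vecx$ whose coefficients are polynomial expressions in the entries of $T(\epsilon)$, and those entries converge to the entries of $I_n$, so the coefficient-wise limit of the whole expression equals $\sum_I a_I(0)\, \vecx^I = f(\vecx)$.
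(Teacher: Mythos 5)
Your proof is correct. Note that the paper does not actually prove this lemma---it is cited as Lemma~6 of \cite{DuttaGIJL24} without reproduction---so there is no ``paper's own proof'' to compare against line by line. That said, your argument is exactly the kind of $\epsilon$-dependent change-of-coordinates manipulation that the paper's own machinery is built around: your substitution $\vecx \mapsto T(\epsilon)\vecx$ with $T(\epsilon)\to I_n$ is a special case of \autoref{lem:e-change} (``Perturbing a Variable by $\epsilon$''), applied with $g = (T\vecx)_1 \simeq x_1$, and the paper explicitly remarks that this lemma is ``implicit in \cite{DuttaGIJL24}'' and is used ``to simplify border Waring rank decompositions.'' The construction of $T$, the verification that $\ell_1(T\vecx) = c\,\epsilon^q x_1$, the order-of-vanishing check that each $\ell_i(T\vecx)$ for $i\geq 2$ still has projective limit $[x_1]$, and the coefficient-wise convergence argument for $\sum_i \ell_i(T\vecx)^d$ are all sound; the one place worth stating explicitly is that the entries of $T(\epsilon)$ are regular at $\epsilon=0$, so that the substituted coefficients remain pole-free and the limit can be taken termwise in the finite monomial expansion, which you do address correctly.
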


The following lemma can be proved by induction on the degree. It also follows from \autoref{cor:derivative}, as discussed in \autoref{rem:imply-D}.

\begin{lemma}[Lemma 7 of \cite{DuttaGIJL24}]\label{lem:D:divide}
    Suppose \( f \in \mathbb{C}[\vecx]_d \) has a local border decomposition with \( r \) summands based at \( [\ell] \). If \( d \geq r - 1 \), then:
    \[
    f = \ell^{d-r+1} \cdot g(\vecx),
    \]
    where \( g \) is a homogeneous polynomial of degree \( r - 1 \). 
\end{lemma}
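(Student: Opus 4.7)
The plan is to prove the lemma by induction on $d$, leveraging the standardization from Lemma 6 so that the boundary case $d = r$ is absorbed into the same induction. The base case $d = r - 1$ is immediate with $g = f$. For the inductive step $d \geq r$, apply Lemma 6 to normalize the decomposition so that $\ell_1 = c\,\epsilon^q \ell$, and write the remaining summands as $\ell_i = \epsilon^{q_i}(\alpha_i(\epsilon)\ell + \beta_i(\epsilon))$ for $i \geq 2$ with $\alpha_i(0) \neq 0$ and $\beta_i \in \epsilon\,\C[[\epsilon]]\otimes W$, where $W \subset \C[\vecx]_1$ is a complement of $\C\ell$ with basis $y_1,\ldots,y_{n-1}$.

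Differentiating $f = \lim_\epsilon \sum_i \ell_i^d$ coordinatewise gives $\partial_{x_j} f = \lim_\epsilon d\sum_i (\partial_{x_j}\ell_i)\,\ell_i^{d-1}$. To rewrite each weighted term as a pure $(d-1)$-th power, I take a $(d-1)$-th root of the scalar $d\cdot\partial_{x_j}\ell_i \in \C(\epsilon)$ in the Puiseux field (equivalently, by substituting $\epsilon \mapsto t^{d-1}$), absorb it into $\ell_i$, and drop summands where $\partial_{x_j}\ell_i \equiv 0$. This expresses $\partial_{x_j} f$ as a local border decomposition of degree $d-1$ based at $[\ell]$. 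The key observation is that for the $\ell$-derivative all $r$ summands survive (since $\partial_\ell\ell_i = \epsilon^{q_i}\alpha_i \neq 0$), whereas for each $y_m$-derivative the standardized summand $\ell_1$ is annihilated because it has no $\vecy$-component, leaving at most $r-1$ summands. Applying the inductive hypothesis to these two decompositions yields $\partial_\ell f = \ell^{d-r} g_\ell$ with $\deg g_\ell = r-1$ (IH at $(d-1, r)$, valid since $d \geq r$) and $\partial_{y_m} f = \ell^{d-r+1} g_m$ with $\deg g_m \leq r-2$ (IH at $(d-1, r-1)$, valid since $d \geq r-1$). Euler's identity $d\cdot f = \ell\,\partial_\ell f + \sum_m y_m\,\partial_{y_m} f$ then places $d\cdot f$ inside
\[
\ell \cdot \ell^{d-r}\C[\vecx]_{r-1} \;+\; \sum_m y_m\cdot \ell^{d-r+1}\C[\vecx]_{r-2} \;\subseteq\; \ell^{d-r+1}\C[\vecx]_{r-1},
\]
and since $d \neq 0$, $f$ itself lies in $\ell^{d-r+1}\C[\vecx]_{r-1}$, giving the desired factorization.

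The main technical point I expect to require care is the $(d-1)$-th-root manipulation, since such roots exist in the Puiseux closure of $\C(\epsilon)$ but not in $\C(\epsilon)$ itself in general. I would make this rigorous by substituting $\epsilon = t^{d-1}$ so that the new decomposition lives in $\C(t)$, and then verify that the resulting expression still satisfies the definition of a local border decomposition with the same base $[\ell]$ and the claimed number of summands. Apart from this, the argument is a clean induction on $d$, and the reason the boundary case $d = r$ requires no separate treatment is precisely that Lemma 6's standardization puts one summand in the pure-$\ell$ direction, so every $\vecy$-derivative automatically loses a summand and lies one power of $\ell$ deeper than $\partial_\ell f$ does, which is exactly what is needed for Euler's identity to give $\ell \mid f$ at $d = r$.
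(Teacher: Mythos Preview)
Your proposal is correct and follows exactly the induction-on-degree argument that the paper attributes to \cite{DuttaGIJL24}: standardize so one summand is a scalar multiple of $\ell$, differentiate, observe that every $y_m$-derivative kills that summand while the $\ell$-derivative keeps all $r$, apply the inductive hypothesis to both, and reassemble with Euler's identity. The Puiseux/reparametrization step you flag is indeed the only technicality, and your $\epsilon\mapsto t^{d-1}$ fix (together with passing to formal power series, since $(d-1)$-th roots of units in $\C[[t]]$ exist there even when they are not rational) is the standard way to handle it.

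For comparison, the paper also records a second, more global route via its Perturbed Diagonalization lemma and \autoref{cor:derivative}: after diagonalizing so the base is $x_1$, each derivative with respect to any $x_k$ with $k\geq 2$ strictly lowers the border rank, so any $r$-fold derivative avoiding $x_1$ vanishes; hence every monomial of $f$ has $x_1$-degree at least $d-r+1$, giving the factorization directly without Euler's identity. That argument trades your local use of Lemma~6 for the stronger diagonal form, and avoids the root-extraction issue entirely, but both proofs rest on the same phenomenon you identified: differentiating away from the base direction drops a summand.
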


The following lemma shows that when the degree is greater than the rank, the decomposition must be local (or consist of local decompositions).

\begin{lemma}[Lemma 10 of \cite{DuttaGIJL24}]\label{lem:local-partition}
    Let \( f \in \mathbb{C}[\vecx]_d \) be such that \( \bwr{f} = r \). If \( d \geq r - 1 \), then there exists a partition \( r = r_1 + \cdots + r_m \) such that \( f \) has a decomposition:
    \[
    f = \sum_{k=1}^{m} \ell_k^{d-r_k+1} \cdot g_k,
    \]
    where each \( \ell_k^{d-r_k+1} g_k \) has a local decomposition, with \( [\ell_k] \) as the base of the decomposition, and:
    \[
    \bwr{\ell_k^{d-r_k+1} \cdot g_k} \leq r_k.
    \]
\end{lemma}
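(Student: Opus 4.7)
The plan is to partition any border decomposition of $f$ by the projective limits of its summands and show each block contributes independently. Start from $f = \lim_{\epsilon \to 0}\sum_{i=1}^{r} \ell_i(\epsilon)^d$; by \eqref{eq:local}, each $\ell_i$ has a well-defined projective limit $[\hat L_i]$. I would group $[r]$ into equivalence classes $I_1, \ldots, I_m$ according to these limits, writing $r_k := |I_k|$ and $[\hat L_k]$ for the common limit of block $k$, and set $h_k(\epsilon) := \sum_{i \in I_k} \ell_i(\epsilon)^d$. The central claim I need to establish is that each $h_k$ has a well-defined polynomial limit $f_k \in \C[\vecx]_d$ as $\epsilon \to 0$.

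Once this claim is granted, the lemma follows immediately: by construction $h_k$ is a local border decomposition of $f_k$ based at $[\hat L_k]$ with $r_k$ summands, so $\bwr{f_k} \leq r_k$; applying \autoref{lem:D:divide} (valid since $d \geq r - 1 \geq r_k - 1$) yields $f_k = \hat L_k^{d - r_k + 1} g_k$ for a homogeneous $g_k$ of degree $r_k - 1$; summing over $k$ then produces the desired $f = \sum_{k=1}^{m} \hat L_k^{d - r_k + 1} g_k$.

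To prove the central claim I would argue by contradiction. Suppose $V < 0$ is the most negative exponent of $\epsilon$ appearing in any $h_k$, and let $a_k^{(V)} \in \C[\vecx]_d$ denote the $\epsilon^V$-coefficient of $h_k$, so that $\sum_k a_k^{(V)} = 0$ by convergence of the total sum. Writing $\ell_i(\epsilon) = \epsilon^{q_i}(c_i \hat L_k + \epsilon\, m_i(\epsilon))$ for $i \in I_k$ with a nonzero scalar $c_i$, and expanding $\ell_i^d$ via the binomial theorem, one finds that $a_k^{(V)}$ lies in the subspace $\hat L_k^{d - s_k}\,\C[\vecx]_{s_k}$, where $s_k := V - d \min_{i \in I_k} q_i$; equivalently, $a_k^{(V)}$ sits in the $s_k$-th osculating space to the Veronese variety at $[\hat L_k^d]$. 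Linear disjointness of these osculating subspaces at the $m$ distinct Veronese points $[\hat L_k^d]$ would then force every $a_k^{(V)} = 0$, contradicting the choice of $V$.

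The main obstacle is justifying this linear disjointness. Pairwise it is clean whenever $s_k + s_{k'} < d$ (a polynomial of degree $d$ cannot be divisible by two coprime factors of total degree exceeding $d$), but $s_k$ could a priori approach $d$, and even pairwise disjointness does not automatically upgrade to a direct-sum statement for several subspaces. Making the argument work will require exploiting the hypothesis $d \geq r - 1$ in an essential way --- for instance through a refined valuation analysis showing $s_k \leq r_k - 1$, so that $s_k + s_{k'} \leq r - 2 < d$ and all the relevant intersections vanish --- or, as a fallback, by recasting the problem in the language of apolarity and invoking that the local cluster supported at each $[\hat L_k]$ contributes to $f$ an independent Gorenstein summand of the claimed form $\hat L_k^{d - r_k + 1} g_k$.
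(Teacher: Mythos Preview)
The present paper does not prove this lemma; it is quoted verbatim as Lemma~10 of \cite{DuttaGIJL24} and used as a black box. So there is no in-paper argument to compare against, and I evaluate your sketch on its own terms.

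Your overall architecture---partition the summands by their projective limits $[\hat L_k]$, show each block $h_k=\sum_{i\in I_k}\ell_i^d$ converges on its own, then apply \autoref{lem:D:divide} blockwise---is the right one, and you correctly isolate the only real difficulty: proving that the lowest-order $\epsilon$-coefficient of each block vanishes separately. Where the proposal falls short is in the mechanism you suggest for bounding $s_k$. With your definition $s_k = V - d\min_{i\in I_k}q_i$ coming straight from the binomial expansion, the inequality $s_k\le r_k-1$ is false: already a singleton block ($r_k=1$) has $s_k$ growing by~$1$ at each successive $\epsilon$-level, and nothing in the valuations alone prevents this. So a ``refined valuation analysis'' in your sense cannot close the gap.

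What does work is to feed \autoref{lem:D:divide} back \emph{inside} the inductive step rather than only at the end. Inducting on the level $p$: once $[\epsilon^{p'}]h_k=0$ for all $p'<p$, the expression $\epsilon^{-p}h_k=\sum_{i\in I_k}(\epsilon^{-p/d}\ell_i)^d$ is itself a local border decomposition of $[\epsilon^p]h_k$ with $r_k$ summands based at $[\hat L_k]$ (pass to $\C(\epsilon^{1/d})\cong\C(\epsilon)$ if $d\nmid p$), and \autoref{lem:D:divide} forces $[\epsilon^p]h_k\in\hat L_k^{\,d-r_k+1}\C[\vecx]_{r_k-1}$. Now the disjointness you need is precisely that the subspaces $\hat L_k^{\,d-r_k+1}\C[\vecx]_{r_k-1}$ are in direct sum whenever $\sum_k r_k=r\le d+1$; this is classical (restrict to a generic $2$-plane, where it becomes the statement that $r\le d+1$ points with multiplicity on the rational normal curve of degree $d$ are linearly independent), and it handles the global direct-sum issue, not merely the pairwise one, in one stroke. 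Your apolarity ``fallback'' would also work but is heavier than necessary here.
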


The following simple lemma is implicit in \cite{DuttaGIJL24}.

\begin{lemma}[Perturbing a Variable by \( \epsilon \)]\label{lem:e-change}
    Let \( f \in \mathbb{C}[\epsilon][\vecx] \). If \( x_1 \simeq g \), where \( g \in \mathbb{C}(\epsilon)[\vecx] \), then:
    \[
    f(\vecx) \simeq f(g, x_2, \ldots, x_n).
    \]
\end{lemma}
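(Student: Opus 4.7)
The plan is to prove this lemma by a direct Taylor expansion argument, exploiting the fact that substituting $g$ for $x_1$ perturbs $f$ only by terms that carry a positive power of $\epsilon$.

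First, I would unpack the hypothesis $x_1 \simeq g$ to obtain the normal form $g = x_1 + \epsilon \cdot h$ for some $h \in \mathbb{C}(\epsilon)[\vecx]$ whose limit as $\epsilon \to 0$ exists. This is justified by expanding the coefficients of $g$, viewed as a polynomial in $\vecx$, as Laurent series in $\epsilon$: the hypothesis that $\lim_{\epsilon\to 0} g = x_1$ says that none of these coefficients has a pole at $\epsilon = 0$, that the coefficient of $x_1$ is $1 + O(\epsilon)$, and that every other coefficient is $O(\epsilon)$. Factoring $\epsilon$ out of $g - x_1$ yields $h$ with the desired property.

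Second, treating $f$ as a polynomial in $x_1$ with coefficients in $\mathbb{C}[\epsilon][x_2,\ldots,x_n]$, I would apply the exact polynomial Taylor formula
\[
f(g, x_2, \ldots, x_n) \;=\; f(x_1 + \epsilon h, x_2, \ldots, x_n) \;=\; f(\vecx) \;+\; \sum_{k \geq 1} \frac{\epsilon^k h^k}{k!} \cdot \frac{\partial^k f}{\partial x_1^k}(\vecx).
\]
Each $k \geq 1$ term carries an explicit factor $\epsilon^k$, and the remaining factor $h^k \cdot \partial_{x_1}^k f / k!$ lies in $\mathbb{C}(\epsilon)[\vecx]$ and has a well-defined limit as $\epsilon\to 0$ (since $h$ does, and $\partial_{x_1}^k f$ is polynomial in $\epsilon$ and $\vecx$). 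Thus each of these terms tends to $0$, so taking the limit term-by-term gives $\lim_{\epsilon\to 0} f(g, x_2, \ldots, x_n) = \lim_{\epsilon\to 0} f(\vecx)$, which is the desired conclusion $f(\vecx) \simeq f(g, x_2, \ldots, x_n)$.

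There is no real obstacle here; the statement is essentially just continuity of polynomial substitution at a well-defined limit point. The only detail that requires the hypothesis (rather than mere boundedness of $g$) is the reduction to the normal form $g = x_1 + \epsilon h$ with $h$ having a limit — this is exactly what guarantees that the perturbation vanishes to first order in $\epsilon$, rather than merely being $O(1)$.
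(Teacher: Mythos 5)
Your proof is correct and takes essentially the same approach as the paper's: both reduce to the normal form $g = x_1 + \epsilon h$ (with $h$ having a well-defined limit) and observe that the resulting perturbation of $f$ carries an explicit factor of $\epsilon$ and therefore vanishes in the limit. The paper phrases this as a monomial-wise expansion after first splitting $f = f_0 + \epsilon f_1$, whereas you package it as a finite Taylor expansion of $f$ in $x_1$; this is a slightly tidier but equivalent bookkeeping (and your remark that $h \in \mathbb{C}(\epsilon)[\vecx]$ with a limit, rather than $h \in \mathbb{C}[\epsilon][\vecx]$, is actually the more precise formulation).
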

\begin{proof}
    Let \( f_0 = \lim_{\epsilon \to 0} f(\vecx) \), where \( f_0 \in \mathbb{C}[\vecx] \). For some \( f_1 \in \mathbb{C}[\epsilon][\vecx] \), we can write \( f = f_0 + \epsilon f_1 \). Similarly, since \( \lim_{\epsilon \to 0} x_1 = \lim_{\epsilon \to 0} g \), we have \( g = x_1 + \epsilon g' \) for some \( g' \in \mathbb{C}[\epsilon][\vecx] \). Expanding \( f_0 \) monomial-wise and substituting \( g \) for \( x_1 \), it is straightforward to verify that:
    \[
    f_0(g, x_2, \ldots, x_n) = f_0(\vecx) + \epsilon f_2(\vecx),
    \]
    for some \( f_2 \in \mathbb{C}[\epsilon][\vecx] \). Thus, \[
    f(g, x_2, \ldots, x_n) = f_0(g, x_2, \ldots, x_n)+\epsilon f_1(g, x_2, \ldots, x_n)=  f_0(\vecx) + \epsilon f_2(\vecx)+\epsilon f_1(g, x_2, \ldots, x_n)\simeq f_0,
    \]
    as required.
\end{proof}

We will frequently use this lemma to simplify border Waring rank decompositions.\\

Additionally, if $\bwr{f} = r$ and $f = \lim_{\epsilon \to 0}\sum_{i=1}^{r}\ell_i^d$, where each $\ell_i(\vecx) \in \C(\epsilon)[\vecx]_1$, then for an integer $q$ such that $\epsilon^q \cdot \ell_i(\vecx) \in \C[\epsilon][\vecx]_1$ for all $i$, it holds that
\begin{equation*}
\epsilon^{qd}f + \epsilon^{qd+1}g = \sum_{i=1}^{r}(\epsilon^q \ell_i)^d,    
\end{equation*}
for some $g \in \C[\epsilon][\vecx]_d$.

Conversely, if for some integer $q$, polynomial $g \in \C[\epsilon][\vecx]_d$, and linear functions $\ell_i(\vecx) \in \C[\epsilon][\vecx]_1$, we have 
\begin{equation}\label{eq:nicer}
\epsilon^q f + \epsilon^{q+1}g = \sum_{i=1}^{r}\ell_i^d,
\end{equation}
then $\bwr{f} \leq r$. 

From this point onward, we will consider the representation in \eqref{eq:nicer} for polynomials $f$ with $\bwr{f} \leq r$.

\section{Improved debordering}

In this section, we provide the proof of \autoref{thm:main}. The proof begins by describing an $\epsilon$-perturbed diagonalization process. We consider homogeneous polynomials $f \in \C[\vecx]_d$ and assume, without loss of generality, that $N(f) = n \leq \bwr{f}$ (see \autoref{lem:ess-var}).

\begin{lemma}[Perturbed Diagonalization]
Let 
\[
\epsilon^q f + \epsilon^{q+1} g = \sum_{i=1}^{r} \ell_i^d
\]
be a decomposition of $f$, where $\ell_i(\vecx) \in \C[\epsilon][\vecx]_1$. Then, there exist:
\begin{itemize}
    \item a matrix $A = A_0 + \epsilon A_1 \in \C[\epsilon]^{r \times r}$, where $A_0 \in \C^{r \times r}$ is invertible,
    \item integers $0 = q_1 \leq q_2 \leq \ldots \leq q_n \leq q$,
    \item a permutation $\pi: [r] \to [r]$, and
    \item linear functions $L_1, \ldots, L_r$,
\end{itemize}
such that, for every $i \in [n]$ and $m \in [q_n]$, defining
\[
k_{i,m} = \argmax_{k \in [i-1]} \{ q_k \leq m \},
\]
the linear function $L_i(\vecx)$ satisfies the following:
\begin{equation}\label{eq:diagonal}
    L_i(\vecx) := \ell_{\pi(i)}(A\vecx) =  
    \begin{cases}  
    \sum_{m=0}^{q_i-1} \epsilon^m \sum_{k=1}^{k_{i,m}} c_{i,m,k} x_k + \epsilon^{q_i} x_i & \text{if } i \leq n, \\[5pt]
    \sum_{m=0}^{q} \epsilon^m \sum_{k=1}^{n} c_{i,m,k} x_k & \text{if } n < i \leq r.
    \end{cases}
\end{equation}
Furthermore, for some polynomial $\tilde{g} \in \C[\epsilon][\vecx]_d$, we have
\[
\epsilon^q f(A_0\vecx) + \epsilon^{q+1} \tilde{g}(\vecx) = \sum_{i=1}^{r} L_i^d,
\]
which is a border Waring rank decomposition of $f(A_0 \vecx)$. Moreover, if the original decomposition of $f$ was local, then the decomposition of $f(A_0\vecx)$ is also local, based at $x_1$.
\end{lemma}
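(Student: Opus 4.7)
My plan is to perform Gaussian elimination on the coefficient matrix of $\ell_1, \ldots, \ell_r$ using right-multiplication by $A$ (column operations) and a row permutation $\pi$; no row combinations are allowed, since they would alter $\sum \ell_i^d$. I work modulo $\epsilon^{q+1}$ over the local ring $\C[[\epsilon]]$, so that units are invertible and divisions are legal, then truncate to obtain $A$ in $\C[\epsilon]$ with invertible constant part $A_0$. As a preliminary normalization, if every $\ell_i$ is divisible by $\epsilon$ then $\sum \ell_i^d$ is divisible by $\epsilon^d$; comparing with $\epsilon^q f + \epsilon^{q+1}g$ and using $f \neq 0$ forces $d \leq q$, so we divide through by $\epsilon^d$ to lower $q$. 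Iterating, some $\ell_i$ has nonzero $\epsilon^0$-part, which will be the source of the first pivot with $q_1 = 0$.

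Then I construct $A$ and $\pi$ inductively for $i = 1, 2, \ldots, n$. At step $i$, after the partial change of coordinates already built, consider the unassigned forms $\ell_j(A\vecx)$ restricted to columns $x_i, \ldots, x_n$; let $q_i$ be the smallest $\epsilon$-order appearing in their coefficients, and set $\pi(i)$ to achieve this minimum. Monotonicity $q_i \geq q_{i-1}$ holds because restricting to fewer columns cannot decrease $\epsilon$-orders; existence of $q_i$ follows from $N(f) = n$, as otherwise all remaining forms would depend only on $x_1, \ldots, x_{i-1}$, forcing the $\epsilon^q$-coefficient of $\sum \ell_j(A\vecx)^d$ (namely $f(A_0\vecx)$) to have fewer than $n$ essential variables. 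Next I perform column operations within columns $x_i, \ldots, x_n$ only---so rows $\pi(1), \ldots, \pi(i-1)$ stay untouched, since they have zero entries in those columns---to concentrate the $\epsilon^{q_i}$-coefficient of row $\pi(i)$ at column $x_i$, zero out row $\pi(i)$'s entries in columns $x_{i+1}, \ldots, x_n$ via $\epsilon$-multiples of column $x_i$, and scale column $x_i$ so its pivot entry is exactly $\epsilon^{q_i}$. Finally, using column $x_i$ (whose entry in row $\pi(i)$ is $\epsilon^{q_i}$ and whose entries in rows $\pi(k)$, $k<i$, are zero), I add multiples into each column $x_j$, $j < i$, to kill the $\epsilon^{\geq q_i}$-part of row $\pi(i)$'s column $x_j$ entry. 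For $i > n$, set $L_i$ to be $\ell_{\pi(i)}(A\vecx)$ truncated to $\epsilon$-degree $\leq q$.

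The $k_{i,m}$ echelon constraint---that only $x_k$'s with $q_k \leq m$ appear at $\epsilon^m$ in $L_i$---arises automatically: at step $j$, row $\pi(i)$'s column $x_j$ entry had $\epsilon$-order $\geq q_j$ by the definition of $q_j$ as the restricted minimum, and all subsequent column operations touching this entry use sources whose $\epsilon$-orders in row $\pi(i)$ are $\geq q_j$, so this lower bound is preserved. Working modulo $\epsilon^{q+1}$, $\ell_{\pi(i)}(A\vecx) = L_i + \epsilon^{q+1}h_i$ for some $h_i$, so $\ell_{\pi(i)}(A\vecx)^d - L_i^d$ is divisible by $\epsilon^{q+1}$; combined with $f(A\vecx) = f(A_0\vecx) + O(\epsilon)$ (via \autoref{lem:e-change}), this yields $\sum L_i^d = \epsilon^q f(A_0\vecx) + \epsilon^{q+1}\tilde g$ for some $\tilde g$, as required. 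For the local case, \autoref{lem:D:standard} reduces to the situation where all $\ell_i$'s leading $\epsilon$-parts are scalar multiples of one linear form $\ell$; mapping $\ell$ to $x_1$ by a $\C$-linear change of coordinates then makes $[x_1]$ the preserved base direction of the new decomposition.

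The step I expect to require the most care is verifying that the $k_{i,m}$ structure emerges automatically from the bookkeeping, rather than requiring extra column operations that would break the echelon of earlier rows. This entails carefully tracking $\epsilon$-orders of all entries through the algorithm, using that the only \emph{safe} column operations at step $i$---those preserving rows $\pi(1), \ldots, \pi(i-1)$---are sourced from columns $x_i, \ldots, x_n$, whose entries in unprocessed rows inherit $\epsilon$-orders $\geq q_i$ throughout.
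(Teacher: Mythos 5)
Your proposal is correct and takes essentially the same approach as the paper: both perform perturbed Gaussian elimination, selecting pivot rows by minimal $\epsilon$-order of the part linearly independent of the already-chosen pivots, and then using column operations (right-multiplication by $A$) and a row permutation $\pi$ to reach the echelon form \eqref{eq:diagonal}. The only difference is presentational: the paper first extracts a $\C$-basis $\tilde L_1,\ldots,\tilde L_n$ of ``leading'' linear forms in one pass and then constructs $A$ all at once by requiring $(\tilde L_k+\tilde L_{k,\epsilon})(A\vecx)=x_k$, whereas you build $A$ iteratively column-by-column over $\C[[\epsilon]]$; your version makes the $\epsilon$-order bookkeeping behind the $k_{i,m}$ echelon constraint (and the invertibility of $A_0$) somewhat more explicit, and you also spell out the preliminary normalization dividing by $\epsilon^d$ to guarantee $q_1=0$, which the paper leaves implicit.
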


To better understand this construction, consider the matrix $Q$ representing the $L_i$'s, where $Q_{i,j}$ is the linear form corresponding to the coefficient of $\epsilon^j$ in $L_i$. Explicitly:
\[
Q_{i,j} = 
\begin{cases} 
    \sum_{k=1}^{k_{i,j}} c_{i,j,k} x_k & \text{if } j < q_i, \\[5pt]
    x_i & \text{if } j = q_i, \\[5pt]
    0 & \text{if } j > q_i.
\end{cases}
\]
Thus, the first $n$ rows of the matrix $Q$ are in lower triangular form. Importantly, a variable $x_k$ does not appear in $L_1, \ldots, L_{k-1}$ and can only appear in columns $j \geq q_k$.

\begin{proof}[Proof of \autoref{lem:e-change}]
    % Added clarity and improved phrasing in the introductory statement
    First, observe that we can assume, without loss of generality, that no $\ell_i$ contains powers of $\epsilon$ larger than $q$. This simplification can be achieved by removing these higher-order terms from $\ell_i$, which would only modify $g$ without affecting the decomposition.

    % Added explanation for the notation
    Let us denote $C_j[\ell_i] \in \C[\vecx]_1$ as the coefficient of $\epsilon^j$ in $\ell_i$.

    % Improved algorithm description
    In Algorithm~\ref{alg:diagonal}, we outline the process for constructing the matrix $A$. The algorithm begins by constructing a basis of linear functions. At each iteration, it identifies the smallest power of $\epsilon$ such that one of the remaining $\ell_i$ has a coefficient at that power which is a linear function linearly independent of all previously constructed basis elements. This $\ell_i$ is then removed from the set, and the identified linear function is added to the basis. This process repeats until all $\ell_i$ are processed.

    % Improved explanation of indexing and transformation
    After this step, each $\ell_i$ is associated with a basis element and a corresponding power of $\epsilon$, representing the coefficient of that power in $\ell_i$ during the iteration. Additionally, the $\ell_i$s are re-indexed based on the order in which they were removed.

    Next, an invertible linear transformation is applied to ensure that each basis element corresponds to a variable. Specifically, under the new indexing, $x_i$ becomes the variable associated with the basis element derived from $\ell_i$. After the transformation, each $\ell_i$ takes the form:
    \[
    \ell_i = \tilde{\ell}_i(x_1, \ldots, x_{i-1}) + \epsilon^{q_i}x_i + \epsilon^{q_i+1}\ell'_i,
    \]
    where $\deg_{\epsilon}(\tilde{\ell}_i) < q_i$. Finally, we adjust $x_i$ by adding $\epsilon\ell'_i$, ensuring that:
    \[
    \ell_i = \tilde{\ell}_i(x_1, \ldots, x_{i-1}) + \epsilon^{q_i}x_i.
    \]

    % Added clarity to the algorithm presentation
    \begin{algorithm}[H]
    \caption{Perturbed Diagonalization}\label{alg:diagonal}
        \begin{algorithmic}[1] 
            \State Set $I_0 = [r]$ and ${\cal L} = \emptyset$.
            \For{$k = 1 \ldots n$}
                \State Find the smallest power $j$ such that $C_j[\ell_i]$ is linearly independent of all elements in ${\cal L}$, for some $i \in I_{k-1}$.
                \State Set $q_k = j$, and let $i_k$ be the smallest index $i$ such that $C_{q_k}[\ell_i]$ is linearly independent of all the linear forms in ${\cal L}$.
                \State Update $I_k = I_{k-1} \setminus \{i_k\}$ and set $\pi(k) = i_k$.
                \State Define $\tilde{L}_k = C_{q_k}[\ell_{i_k}]$ and $\tilde{L}_{k,\epsilon} = \sum_{j = q_k+1}^{q} \epsilon^{j-q_k} C_j[\ell_{i_k}]$.
                \State Update ${\cal L} = {\cal L} \cup \tilde{L}_k$.
            \EndFor
            \State Complete $\pi$ to be a permutation on $[r]$.
            \State Define $A \in \C[\epsilon]^{r \times r}$ such that for all $k \in [r]$, $(\tilde{L}_k + \tilde{L}_{k,\epsilon})(A\vecx) = x_k$.
            \State Define $L_k(\vecx) := \ell_{\pi(k)}(A\vecx)$.
        \end{algorithmic}
    \end{algorithm}

    % Improved logical structure and clarity in the proof argument
    To verify the correctness of the constructed matrix $A$, observe that for each $k$, $\tilde{L}_{k,\epsilon}|_{\epsilon=0} = 0$, meaning that $\tilde{L}_k + \tilde{L}_{k,\epsilon} = \tilde{L}_k + O(\epsilon)$. By the definition of $q_k$, for every $m < q_k$, $C_m[\ell_i] \in \Span\{\tilde{L}_j \mid j < k\}$ for all $i \in [r]\setminus I_{k-1}$. Thus, after the variable transformation defined by $A$, only variables $x_1, \ldots, x_{k-1}$ appear in the coefficients of $\epsilon^m$ for $m < q_k$. Furthermore, the coefficient of $\epsilon^{q_k}$ in $L_k$ is precisely $x_k$, ensuring that \eqref{eq:diagonal} holds.

    Note that $A$ can be written as $A = A_0 + \epsilon A_1$, where $A_0 \in \C^{r \times r}$ is invertible because $\tilde{L}_k(A_0\vecx) = x_k$. Consequently, for some polynomial $\tilde{g} \in \C[\epsilon][\vecx]_d$, we have:
    \[
    f(A\vecx) + \epsilon \cdot g(A\vecx) = f(A_0\vecx) + \epsilon \cdot \tilde{g}(A_0\vecx).
    \]
    Therefore:
    \[
    \epsilon^q f(A_0\vecx) + \epsilon^{q+1} \tilde{g}(A_0\vecx) = \epsilon^q f(A\vecx) + \epsilon^{q+1}g(A\vecx) = \sum_{i=1}^{r}\ell_i(A\vecx)^d = \sum_{i=1}^{r}L_i^d.
    \]
The fact that the new decomposition remains local, provided the original decomposition was local, follows directly from the invertibility of $A_0$.
\end{proof}

\begin{corollary}\label{cor:derivative}
    Let $f$, $L_i$, and $\tilde{g}$ be as in \autoref{lem:e-change}. Then, for every $k\in [n]$ and $j\in [r-1]$, it holds that 
    \[
    \bwr{\frac{\partial^j f}{\partial x_k^j}} \leq r-k.
    \]
\end{corollary}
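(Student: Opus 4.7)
The plan is to differentiate both sides of the Perturbed Diagonalization identity
\[
\epsilon^q f(A_0 \vecx) + \epsilon^{q+1}\tilde g(\vecx) = \sum_{i=1}^{r} L_i^d
\]
exactly $j$ times with respect to $x_k$. By the chain rule, and using that each $L_i$ is linear (so $\partial L_i/\partial x_k$ is a constant in $\vecx$), the right-hand side becomes $(d)_j \sum_{i=1}^{r} c_i^{\,j} L_i^{d-j}$, where $c_i := \partial L_i/\partial x_k$. The triangular shape established by the Perturbed Diagonalization lemma guarantees that $x_k$ does not appear in $L_i$ for $i < k$, so $c_i = 0$ in that range, and the sum collapses to at most $r-k+1$ nonzero summands, indexed by $i = k,\ldots,r$.

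Next I would package each surviving $c_i^{\,j} L_i^{d-j}$ as a $(d-j)$-th power of a single linear form over the extension $\C(\epsilon^{1/(d-j)})$: setting
\[
M_i \;:=\; (d)_j^{1/(d-j)}\,\epsilon^{-q/(d-j)}\, c_i^{\,j/(d-j)}\, L_i,
\]
one gets $M_i^{d-j} = (d)_j\,\epsilon^{-q}\, c_i^{\,j} L_i^{d-j}$, and each $M_i$ sits in a rational family of linear forms with a well-defined projective limit as $\epsilon \to 0$. Dividing the differentiated identity by $\epsilon^q/(d)_j$ and letting $\epsilon\to 0$ would then yield $\partial^j f/\partial x_k^{\,j} \simeq \sum_{i=k}^{r} M_i^{d-j}$, establishing the intermediate bound $\bwr{\partial^j f/\partial x_k^{\,j}} \leq r-k+1$.

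The final and hardest step, which I expect to be the main obstacle, is to save one more summand. Here I would exploit the rigid form $c_k = \epsilon^{q_k}$ and $L_k = \tilde L_k(x_1,\ldots,x_{k-1}) + \epsilon^{q_k} x_k$: the $i=k$ summand equals $\epsilon^{j q_k}(\tilde L_k + \epsilon^{q_k} x_k)^{d-j}$ and depends only on $x_1,\ldots,x_k$. The plan is to use the original identity $\sum_i L_i^d \equiv \epsilon^q f \pmod{\epsilon^{q+1}}$—specifically, the fact that $L_k^d = \tilde L_k^d + O(\epsilon^{q_k})$ forces a $\C$-linear relation among the $\epsilon^0$-coefficients of $L_1^d,\ldots,L_r^d$—to rewrite the $i=k$ term in the derivative sum as a $\C(\epsilon)$-linear combination of the remaining $L_i^{d-j}$ terms plus an $O(\epsilon^{q+1})$ error, then fold this combination into modified forms $M_i'$ for $i > k$. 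The subtle point is ensuring that each resulting $M_i'$ stays a single linear form in a rational family with a well-defined projective limit—rather than becoming a multi-rank object—which is what produces a genuine $(r-k)$-term border decomposition.
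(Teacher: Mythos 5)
Your first paragraph reproduces what the paper actually does: differentiate $\epsilon^q f(A_0\vecx)+\epsilon^{q+1}\tilde g = \sum_i L_i^d$ exactly $j$ times in $x_k$, use linearity of the $L_i$, and invoke the triangular structure so that the terms with $i<k$ vanish. That step is correct and is the whole content of the paper's argument (the repackaging via $M_i$ and passing to $\C(\epsilon^{1/(d-j)})$ is unnecessary: over $\C$ a scalar multiple $c\,\ell^{d-j}$ is already a single $(d-j)$-th power, so each surviving summand has rank $\leq 1$ as is). Where you go off the rails is the ``save one more summand'' step. Your plan is to use the relation $\sum_i L_i^d \equiv \epsilon^q f \pmod{\epsilon^{q+1}}$ to eliminate the $i=k$ term. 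But that identity constrains the degree-$d$ powers $L_i^d$, whereas the term you need to absorb is a degree-$(d-j)$ power $\epsilon^{jq_k}L_k^{d-j}$; a linear relation among $d$-th powers does not induce any relation among $(d-j)$-th powers of the same forms, and there is no reason the resulting ``modified'' $M_i'$ remain single linear forms (rather than sums). So the final step, as you yourself flag, is not an argument.

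What is worth noting, though, is that the paper's own proof obtains the sharper $r-k$ only by writing the derivative of $\sum_i L_i^d$ as a sum starting at $i=k+1$. Since in the diagonalization $L_k = \tilde L_k(x_1,\ldots,x_{k-1}) + \epsilon^{q_k}x_k$ has $\partial L_k/\partial x_k = \epsilon^{q_k} \neq 0$, the sum in fact starts at $i=k$, and the argument as written yields only $\bwr{\partial^j f/\partial x_k^j} \leq r-k+1$, exactly the intermediate bound you derive. (A concrete check: for $f=x_1^{d-1}x_2$ with $r=k=n=2$, $j=1$, the claimed bound $r-k=0$ fails, while $r-k+1=1$ holds.) So your instinct that getting from $r-k+1$ down to $r-k$ requires a further idea is sound; the paper's proof does not actually supply one, and the off-by-one is harmless for the main theorem since it is only applied with $k$ of order $\sqrt{r}$. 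The honest takeaway: your ``easy'' part coincides with the paper's argument and is correct; your speculative final step should simply be dropped, as the clean bound this approach gives is $r-k+1$.
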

\begin{proof}
    By the definition of the $L_i$s, $x_k$ does not appear in $L_1, \ldots, L_{k-1}$ (see \eqref{eq:diagonal}). Therefore, using the notation of \autoref{lem:e-change}, we obtain:
    \begin{equation*}
        \epsilon^q \frac{\partial^j f(A_0\vecx)}{\partial x_k^j} + \epsilon^{q+1} \frac{\partial^j \tilde{g}(\vecx)}{\partial x_k^j} =
        \frac{\partial^j}{\partial x_k^j} \left( \epsilon^q f(A\vecx) + \epsilon^{q+1}g(A\vecx)\right)
        = \frac{\partial^j}{\partial x_k^j} \left( \sum_{i=1}^{r} L_i^d \right) 
        = (d)_j \cdot \sum_{i=k+1}^{r} \left( \frac{\partial L_i}{\partial x_k} \right)^j \cdot L_i^{d-j}.
    \end{equation*}
    Since $A_0$ is invertible, it follows that 
    \[
    \bwr{\frac{\partial^j f}{\partial x_k^j}} \leq r-k.
    \]
\end{proof}

\begin{remark}\label{rem:strengthened}
    The conclusion of \autoref{cor:derivative} can be strengthened to 
    \[
    \bwr{\frac{\partial^j f}{\partial x_k^j}} \leq r-k-j+1,
    \]
    as after each derivative, we can re-diagonalize and conclude that each successive derivative, not taken with respect to $x_1$, reduces the rank further.
\end{remark}

\begin{remark}\label{rem:imply-D}
    We note that \autoref{cor:derivative} implies \autoref{lem:D:divide}, as it shows that taking a derivative with respect to any variable other than $x_1$ reduces the Waring rank. Consequently, any monomial can contain at most $r-1$ other variables.
\end{remark}

We now give the proof of \autoref{thm:main}. 
\begin{proof}[Proof of \autoref{thm:main}]

    The proof proceeds by induction on $r$. 
    %For $r \leq 5$, the claim follows from \cite{DuttaGIJL24}. 
    For $1 < r \leq 100$, the result of \cite{DuttaGIJL24} implies that $\wr{f} \leq 4^r < r^{10\sqrt{r}}$. Hence, we assume from now on that $r \geq 100$.

    Let $\epsilon^q f + \epsilon^{q+1} g = \sum_{i=1}^{r} \ell_i^d$
    be a border Waring rank decomposition of $f$, where $\ell_i(\vecx) \in \C[\epsilon][\vecx]_1$. 
    By applying \autoref{lem:e-change}, we can assume without loss of generality that the $\ell_i$s are in diagonal form, as described in the lemma.  

    We handle two separate cases. The first is when $d \geq r-1$, and the second is when $d < r-1$. 

    \paragraph{The case $d \geq r-1$.} \sloppy
    From \autoref{lem:local-partition}, there exists a partition $r = r_1 + \ldots + r_m$ such that $f$ has a decomposition
    \[
    f = \sum_{k=1}^{m} \ell_k^{d-r_k+1} \cdot g_k,
    \]
    where $\ell_k^{d-r_k+1} g_k$ has a local decomposition, with $[\ell_k]$ being the base of the decomposition, and $\bwr{\ell_k^{d-r_k+1} \cdot g_k} \leq r_k$. 

    Since $d \cdot \sum_{k=1}^{m} r_k^{10\sqrt{r_k}} \leq d \cdot r^{10\sqrt{r}}$, it suffices to prove \autoref{thm:main} for local decompositions when $d \geq r-1$.

    Assume that $f$ has a local border Waring rank decomposition, based in $x_1$, as in the conclusion of \autoref{lem:e-change}.

    Let $Y = \{x_1, \ldots, x_{\lfloor 10\sqrt{r} \rfloor}\}$ and $Z = \{x_{\lfloor 10\sqrt{r} \rfloor+1}, \ldots, x_n\}$. For convenience, rename the variables in $Z$ as $Z = \{z_1, \ldots, z_m\}$ for $m = n - \lfloor 10\sqrt{r} \rfloor$. By \autoref{lem:D:divide}, we have the following representation of $f$: 
    \begin{equation}\label{eq:f-representation}
    f = x_1^{d-r+1} \left(f_0(Y) + \sum_{i=1}^{m} \sum_{k=1}^{r-1} z_i^k \cdot g_{i,k}(Y, z_{i+1}, \ldots, z_m)
    \right).    
    \end{equation}
    In other words, we first consider monomials involving only the $Y$ variables. Then, each other monomial contains one or more variables from $Z$, and we group these monomials according to the minimal $i$ and then the maximal $k$ such that $z_i^k$ divides them.

    Clearly, $f_0$ is a polynomial of degree $r-1$ in $\lfloor 10\sqrt{r} \rfloor$ variables, and hence its Waring rank satisfies
    \[
    \wr{f_0} \leq \binom{\lfloor 10\sqrt{r} \rfloor + r - 3}{r-2} = \binom{\lfloor 10\sqrt{r} \rfloor + r - 3}{\lfloor 10\sqrt{r} \rfloor - 1} \leq \left(\frac{e(r + \lfloor 10\sqrt{r} \rfloor - 3)}{10\sqrt{r} - 1}\right)^{\lfloor 10\sqrt{r} \rfloor - 1} < (5 \cdot r)^{5\sqrt{r}}.
    \]
    Consequently, 
    \[
    \wr{x_1^{d-r+1} \cdot f_0} < d \cdot (5 \cdot r)^{5\sqrt{r}}.
    \]
 
    Next, observe that $x_1^{d-r+1} \cdot g_{i,k}$ can be obtained by taking $k$ derivatives of $f$ with respect to $z_i$, setting $z_1 = \ldots = z_i = 0$, and multiplying the result by $k!$. 
    
    From \autoref{cor:derivative}, we conclude that $\bwr{x_1^{d-r+1} \cdot g_{i,k}} \leq r - \lfloor 10\sqrt{r} \rfloor - i$, and clearly it is a polynomial on at most $n - i$ variables. The induction hypothesis implies that
    \[
    \wr{x_1^{d-r+1}\cdot  g_{i,k}} \leq d \cdot  (r - \lfloor 10\sqrt{r} \rfloor - i)^{10\sqrt{r - \lfloor 10\sqrt{r} \rfloor - i}}.
    \]
    Hence,
    \[
    \wr{x_1^{d-r+1}\cdot z_i^k\cdot  g_{i,k}} \leq r\cdot d \cdot  (r - \lfloor 10\sqrt{r} \rfloor - i)^{10\sqrt{r - \lfloor 10\sqrt{r} \rfloor - i}}.
    \]
    It follows that    
    \begin{align*}
    \wr{f} &< d \cdot (5 \cdot r)^{5\sqrt{r}} + d \cdot \sum_{i=1}^{m} \sum_{k=1}^{r-1} r\cdot (r - \lfloor 10\sqrt{r} \rfloor - i)^{10\sqrt{r - \lfloor 10\sqrt{r} \rfloor - i}} \\
    &< d \cdot (5 \cdot r)^{5\sqrt{r}} + d \cdot r^2\cdot \sum_{i=1}^{m} (r - \lfloor 10\sqrt{r} \rfloor - i)^{10\sqrt{r - \lfloor 10\sqrt{r} \rfloor - i}} \\
    &< d \cdot (5 \cdot r)^{5\sqrt{r}} + d \cdot r^3 \cdot r^{10(\sqrt{r} - 5)} \\
    &< d \cdot r^{10\sqrt{r}}.
    \end{align*}

    \paragraph{The case $d < r-1$.}
    Assume that $f$ has a border Waring rank decomposition as in the conclusion of \autoref{lem:e-change}. As before, set $Y = \{x_1, \ldots, x_{\lfloor 10\sqrt{r} \rfloor}\}$ and $Z = \{x_{\lfloor 10\sqrt{r} \rfloor+1}, \ldots, x_n\}$, and rename the variables in $Z$ as $Z = \{z_1, \ldots, z_m\}$ for $m = n - \lfloor 10\sqrt{r} \rfloor$. Using the same reasoning as before, we conclude that 
    \begin{align*}
    \wr{f} &< d \cdot (5 \cdot r)^{5\sqrt{r}} + d \cdot r\cdot \sum_{i=1}^{m} \sum_{k=1}^{r-1} (r - \lfloor 10\sqrt{r} \rfloor - i)^{10\sqrt{r - \lfloor 10\sqrt{r} \rfloor - i}} \\
    &< d \cdot r^{10\sqrt{r}}. \qedhere
    \end{align*}
\end{proof}

\section*{Acknowledgement}
I  thank Prerona Chatterjee, Shir Peleg and Ben lee Volk for valuable discussions on the problem.

\bibliographystyle{alpha}
\bibliography{main}

\end{document}